\newtheorem{theorem}{Theorem}
\newtheorem{lemma}[theorem]{Lemma}
\newtheorem{example}[theorem]{Example}
\newcommand{\sproof}{\noindent{\bf Proof.}\hspace*{1em}}
\def\literalqed{{\ \nolinebreak\hfill\mbox{\quad}}}
\long\def\symbolfootnote[#1]#2{\begingroup%
\def\thefootnote{\fnsymbol{footnote}}\footnote[#1]{#2}\endgroup} 
    \renewcommand\part{%
      \if@openright
        \cleardoublepage
      \else
        \clearpage
      \fi
      \thispagestyle{empty}%
      \if@twocolumn
        \onecolumn
        \@tempswatrue
      \else
        \@tempswafalse
      \fi
      \null\vfil
      \secdef\@part\@spart}
\def\calW{{\cal W}}
\def\calX{{\cal X}}
\DeclareMathOperator{\argmin}{\text{argmin}}
\newcommand{\ind}[1]{\llbracket #1 \rrbracket}
\newcommand{\floor}[1]{\left\lfloor #1 \right\rfloor}
\def\({\left(}
\def\){\right)}
\renewcommand{\ind}[1]{\left\llbracket #1 \right\rrbracket}
\newcommand{\xqed}{\mbox{\raggedright $\Diamond$}}
\def\xqedhere{\hfill\xqed}
\renewcommand{\vec}{\boldsymbol}
\def\cal{\mathcal}
\def\wmn{\mathbf{wmn}}
\def\wpl{\mathbf{wpl}}
\newcommand{\kibitz}[2]{\ifnum\Comments=1{\color{#1}{#2}}\fi}
\newcommand{\rmr}[1]{\kibitz{blue}{[RESHEF:#1]}}
\newcommand{\gal}[1]{\kibitz{orange}{[GAL:#1]}}
\newcommand{\obp}[1]{\kibitz{purple}{[OMER :#1]}}
\newcommand{\calXb}{\cal X_{\bot}}
\newcommand*{\defeq}{\stackrel{\text{def}}{=}}
\title{Efficient Crowdsourcing via Proxy Voting}
\author{Gal Cohensius, Omer Ben Porat, Reshef Meir, Ofra Amir \\
	Technion---Israel Institute of Technology\\
}
\begin{document}		
\begin{abstract}  
Crowdsourcing platforms offer a way to label data by aggregating answers of multiple unqualified workers.  We introduce a \textit{simple} and \textit{budget efficient} crowdsourcing method named Proxy Crowdsourcing (PCS).  PCS collects answers from two sets of workers: \textit{leaders} (a.k.a proxies) and \textit{followers}.  Each leader completely answers the survey while each follower answers only a small subset of it.  We then weigh every leader according to the number of followers to which his answer are closest, and aggregate the answers of the leaders using any standard aggregation method (e.g., Plurality for categorical labels or Mean for continuous labels).  We compare empirically the performance of PCS to unweighted aggregation, keeping the total number of questions (the budget) fixed. We show that PCS improves the accuracy of aggregated answers across several datasets, both with categorical and continuous labels.  Overall, our suggested method improves accuracy while being simple and easy to implement.
\end{abstract}	
\section{Introduction} \label{sec:intro}
Crowdsourcing is the process of completing tasks by soliciting contributions from a large group of people.  In recent years, crowdsourcing has become increasingly prevalent, and is being used for a variety of tasks including data collection and labeling, prediction and open innovation.  While crowdsourcing enables leveraging a large pool of workers, the quality of contributions strongly depends on workers' capabilities and motivation which varies widely across workers. In particular, since crowdsourcing workers are usually not experts and since some workers might not exert effort, answers obtained may be erroneous \cite{kazai2011crowdsourcing,vuurens2011much,wais2010towards}. Thus, increasing the accuracy and efficiency of crowdsourced work is an important area of research~\cite{quinn2011human,allahbakhsh2013quality}.

The overall quality of the crowdsourced task is determined by two factors: the quality of inputs (answers) provided by the individual crowd workers, and the method used to aggregate those answers. Prior work has examined ways to improve the quality of inputs (e.g., by providing performance-based monetary incentives~\cite{yin2015bonus}), as well as at methods for aggregating multi-source noisy estimations toward ``truth discovery''~\cite{li2016survey,hung2013evaluation}.

Aggregation of opinions is closely related to \emph{social choice}, where the opinions or preferences of multiple voters are aggregated using some voting method. Mao et al.~\cite{mao2012social,mao2013better} compared the performance of various voting rules for aggregating rankings, and showed that some of them yield an outcome that is significantly closer to the ground truth than others. 
More generally, Mao et al. argue that while sophisticated statistical methods have been developed for specific human computation tasks, these tools often require ``\emph{significant amount of data, work, knowledge, and computational
resources,}'' (\cite{mao2012social}, p.~141) which highlights the advantage of using simple voting rules.

In this paper we suggest to use a simple heuristic that is based on \emph{proxy voting} to improve the accuracy of aggregated answers in crowdsourcing platforms. Proxy voting allows voters not to vote directly, and instead select a \emph{proxy} (presumably another voter with similar opinions) to vote on their behalf. Several papers analyzed the benefits of proxy voting to society from theoretical and empirical aspects (see related work). In a recent theoretical paper, Cohensius et al.~\cite{cohensius2017proxy} showed that with proxy voting, the preferences of a large population from a given distribution can be represented well with only a small subset of ``active voters''. Therefore, we hypothesize that similar ideas can be applied to the problem of aggregating answers of crowd workers.

In this paper, we suggest a Proxy Crowdsourcing (PCS) policy, where answers are collected from two groups: \textit{leaders} answer the complete set of questions, while \textit{followers} are asked only a random subset of the questions, which are only used to find the ``closest'' leader.
Then, we aggregate the leaders' answers, where each leader is weighted by the number of followers to which she is closest.


The PCS method is not supposed to replace existing aggregation methods, but rather to boost their performance by properly weighing the workers or experts. For example, with PCS, instead of asking (say) 10 workers to answer 100 questions each and take the majority answers, we may ask only 5 workers (the leaders) to answer everything, and use the remaining budget to collect answers from 25 additional workers (the followers) who will answer 20 questions each. The aggregated answer is then calculated by taking the \emph{Weighted Majority} of the 5 leaders.

As such, proxy crowdsourcing has potential to improve the accuracy of crowdsourcing while the total effort by workers remains fixed\footnote{Although we do not tackle this point explicitly, reducing the amount of feedback required from each voter has received some attention in the social choice literature (see, e.g. \cite{lu2011robust,drummond2013elicitation}), as means to lower the cognitive burden on human participants.}.
This paper examines the use of proxy crowdsourcing and identifies conditions under which PCS leads to higher quality outcomes (i.e., outcomes that are closer to the ground truth). 


\subsection{Related work}
\paragraph{Improving the efficiency and quality of crowdsourcing}
The crowdsourcing literature has explored several avenues for improving the quality of workers' inputs. Some works focused on designing monetary incentives for workers to elicit higher quality inputs. For example, different schemes of providing workers bonus payments have been proposed and studied~\cite{yin2014monetary,yin2015bonus,yin2013effects,shah2015double}, and peer prediction rules have been used in payment schemes to incentivize truthful reporting by workers~\cite{kamar2012incentives,gao2014trick,radanovic2016incentives}. Other works have suggested non-monetary incentives, such as engaging workers by increasing their curiosity~\cite{law2016curiosity} or eliciting commitments from workers to provide high quality inputs~\cite{elmalech2017but}. Our approach does not attempt to directly improve the quality of the input provided by workers, but rather aggregates inputs in a way that aims to assign higher weights to higher quality workers. 

Another practice commonly used to increase the quality of crowdsourcing is to monitor the quality of workers' responses by implementing ``gold questions'' with known answers in order to filter workers that fail in those questions~\cite{zhang2012big,gormley2010non}. To improve the effectiveness of this approach, Bragg et al.~\cite{bragg2016optimal} proposed methods for determining the optimal amount of gold questions to include. 
 Some works have further explored the use of planning approaches to dynamically determine when to collect additional inputs from workers based on confidence in the current solution~\cite{dai2013pomdp,kamar2012combining}. Proxy crowdsourcing aims to reduce the number of inputs required without requiring the use of sophisticated planning methods. 


Research in crowdsourcing has also considered different methods of aggregating the responses obtained from individual workers to achieve higher quality results~\cite{hung2013evaluation,kamar2012combining,simpson2013dynamic,venanzi2014community,ma2015faitcrowd}. Most closely related to our approach, Venanzi et al.~\cite{venanzi2014community} identify ``communities'' of workers that are similar to each other when workers only provide answers to part of the question set, and use a Bayesian computation to aggregate the inputs from the different communities. In contrast to this approach, proxy crowdsourcing aims to determine a priori how to utilize a budget (i.e., how many answers to elicit from workers), and does not require complex estimation computations in the aggregation of workers' inputs. 

Importantly, we note that the proxy crowdsourcing approach is complimentary to most of the other approaches for improving the quality and efficiency of crowdsourcing processes, and can be used in conjunction with these approaches.  




For an overview of proxy voting in the context of \emph{preference aggregation} (i.e., when there is no ground truth), see Cohensius et al.~\cite{cohensius2017proxy}.

\subsection{Contribution}
We evaluate PCS versus standard aggregation methods  on a wide range of datasets in several domains: some synthetic, some collected for this paper, and some that were collected for different purposes by other researchers. 

Our evaluation shows that for almost all datasets, it is beneficial to aggregate the crowdsourcing answers using PCS, in the sense that it decreases the expected distance between the aggregated answers and the ground truth. In some datasets the improvement reaches up to 32\%.  

We analyze the reasons for this improvement by looking at the initial distribution of workers' competence, and the distribution of leaders' weights after the aggregation.  We establish a preliminary theoretical result that explains why more accurate answer vectors get higher weights.

\def\calX{\mathcal{X}}
\def\calXb{\calX_\bot}
\newcommand{\tilv}[1]{\tilde{\boldsymbol{#1}}}
\newcommand{\hatv}[1]{\hat{\boldsymbol{#1}}}
\def\NN{\textrm{NN}}

\section{Model}
We denote by $\ind X$ the indicator variable of event $X$. $[k]$ is a shorthand for $\{1,\ldots,k\}$.

Let $X^k$ be a k-dimensional Euclidean space, and $d$ be a distance metric on $X^k$.  Let $\cal X_{\bot} \defeq \cal X \cup \bot$, where $\bot$ is a missing entry. Denote by $\cal X_{\bot}^k$ the space of \textit{partial vectors} of $\cal X^k$.
We shall make use of the tilde notation, i.e. $\vec {\tilde x}$, to indicate that a vector is partial. Denote by $\bot(\tilv x), \top(\vec{\tilde x} ) \subseteq [k]$, respectively, the missing and valid entries of $\tilde {\vec x} \in \cal X_{\bot}^k$. 

We denote by $x^{(j)}$ the $j$-th entry of a vector $\vec x$. Given a (partial) vector $\tilv x$ and subset of indices $A\subseteq \top(\vec{\tilde x} )$, we define $\tilv x(A)=(\tilde x^{(i)})_{i\in A}\in \calX^{|A|}$. 
We extend the pseudo-distance function $d$ to partial vectors by only considering joint entries:
\[
d(\tilv x_1,\tilv x_2 ) \defeq d(\tilv x_1(A),\tilv x_2(A)),
\]
where $A= \top(\tilv x_1)\cap \top(\tilv x_2)$.

Given a multi-set of (complete) vectors $S_M=\{\vec x_1,\dots ,\vec x_m\} $ and a partial vector $\tilv x$, we denote by $\NN(S_M,\tilv x)$ the multi-set of vectors in $S_M$ that are the closest to $\tilv x$ according to $d$. Formally,
\[
\NN(S_M,\tilv x) = \argmin_{j \in M} d\left(\vec x_j,\tilv x\right).
\]


\paragraph{Aggregation with leaders}
An \emph{aggregation rule} is a function $\vec g:\mathbb R^m \times \calX^{k \cdot m}\rightarrow \calX^k$. That is, a function that maps a set of $m$ weighted complete vectors to a single vector $\hatv x=\vec g(\vec w_M, S_M)$.\footnote{More precisely, this is a class of functions, one for every $m$. We only use simple and neutral aggregation rules where the effect of weights is like duplicating voters. That is,  $\vec g((w_1,\ldots,w_i+1,\ldots,w_m),S_M) = \vec g((w_1,\ldots,w_i,1,\ldots,w_m),(\vec x_1,\ldots,\vec x_i,\vec x_i,\ldots,\vec x_m))$.}

An \emph{instance} in the space $(\calX^k,d)$ is a tuple $(S_M,\tilde S_N)$, where: 
\begin{itemize}
\item $S_M = \{\vec x_1,\ldots, \vec x_m\}$ is a set of complete answers (``leaders'');
\item $\tilde S_N = \{\tilv x_1,\ldots,\tilv x_n\}$ is a set of partial answers (``followers'');
\end{itemize}

Given an instance and an aggregation rule $\vec g$, the aggregated answer vector is $\vec g(S_M,\tilde S_N, d) = \vec g(\vec w_M,S_M)$ where $w_j= 1+\sum_{i\in N}\frac{\ind{j\in \NN(S_M,\tilv x_i)}}{|\NN(S_M,\tilv x_i)|}$. That is, the number of followers to which $\vec x_j$ is closest, where each follower divides her voting weight equally among all of her nearest neighbors. 	  

\begin{example}  Consider a space $(\calX^k,d)$ where $\cal X = \{0,1\},k=4$, and $d$ is the Hamming distance. Let $\vec g$ be the Weighted Majority function. Consider
\[
S_M = \{(0,0,0,0),(0,1,0,1),(0,1,1,0),(1,1,1,1) \}, \tilde S_N=\{(\bot,1,\bot, 1), (\bot, 0 ,1,\bot) \}.
\]
Then $\NN(S_M,\tilv x_1) = \{2,4\}$ and $\NN(S_M,\tilv x_2)=\{1,3,4\}$. Thus, the weights are $w_1 = w_3 = 1\frac13, w_2=1\frac12, w_4=1\frac56$.  The aggregated answer is $\hatv x=(0,1,1,1)$. 
\end{example}

\paragraph{Evaluation metrics}Given a \emph{ground truth} vector $\vec x^*$ and an aggregated answer vector $\hatv x$, We call $d(\hatv x, \vec x^*)$ the \emph{aggregated error}.
For every individual worker $j$, we call $d(\vec x_j,\vec x^*)$ the \emph{individual error} of worker~$j$.


\paragraph{Crowdsourcing policies}
Consider a population $\calW$ of workers, which is a distribution over  complete answer vectors $\calX^k$. 

A \emph{crowdsourcing policy with leaders} (or just \emph{policy}) is a tuple $(m,n,\alpha,\vec g,d)$, where $m$ is the number of leaders; $n$ is the number of followers; $\alpha$ is the fraction of answers that each follower should answer; $\vec g$ is an aggregation rule; and $d$ is a distance metric. 

A policy is used first to sample an instance from the population, and then to aggregate answers of this instance as follows. 
The policy samples $m$ full vectors $S_M$ from $\calW$ and $n$ partial vectors $\tilde S_N$ with $\floor{\alpha k}$ valid entries (selected uniformly at random). Then, an aggregated answer $\hatv x= \vec g(S_M, \tilde S_N,d)$ is computed. Note that $\hatv x$ is a random variable. 

E.g., if we have unlimited budget, we can recruit and aggregate a large number of workers to reduce the error. However, we assume that the budget is limited, and that the required budget is linear in the total number of questions we ask. 

To evaluate the benefit of using leaders, we define $PCS_{\alpha,\beta}(B)$ the policy that spends a total budget of $B$, where a fraction $\beta$ from this budget is spent on followers who provide answers on $\alpha k$ of the questions.

Except when specified otherwise, the policies we will use to compare crowdsourcing with and without leaders are $PCS(B) = PCS_{0.2,0.33}(B)$ and $CS(B)=PCS_{0,0}(B)$, respectively. 

For example, if $k=20$ and $B$ is sufficient for $1200$ answers, then $CS(B)$ results in the policy $(m=60,n=0,\alpha=0,\vec g,d)$ (60 workers that answer all questions); whereas
$PCS(B)$ results in the policy $(m=40,n=100,\alpha=0.2,\vec g, d)$. That is, $40$ leaders will answer all $20$ questions, and 
the remaining half of the budget will be spent on $100$ followers who will answer $\floor{\alpha k}=4$  random questions each.


\section{Empirical Methodology}
We evaluated the use of proxy voting on datasets from three sources: 1. datasets we collected using Amazon Mechanical Turk; 2. existing datasets from \cite{shah2015double}; 3. datasets we generated from simple distributions for specified parameters.  Workers were randomly divided into leaders and followers.

These datasets included a variety of tasks differing in the types of questions asked (e.g., multiple choice questions, continuous assessments\obp{multiple choice / categorical}). We next describe these in more detail. 

In the datasets we collected,  participants were given short instructions,\footnote{see survey in \href{https://goo.gl/W47X5R}{https://goo.gl/W47X5R}.}  then they had to answer $k=25$ questions. We recruited participants through Amazon Mechanical Turk. We restricted participation to workers that had at least 50 assignments approved. We plant in each survey a simple question that can be easily answered by a anyone who understand the instructions of the experiment (known as Gold Standards questions).  Participants who answered correctly the gold standard question received a payment of $\$0.3$.  Participants did not receive bonuses for accuracy.  The study protocol was approved by the Institutional Review Board at the Technion.

For each new or existing dataset, we mention below the total number of collected answer vectors. However we emphasize that unless specified otherwise, aggregation was always performed with the $CS(B)$ and $PCS(B)$ policies, where $B$ is sufficient budget for exactly 12 complete answer vectors. For robustness, we tested each policy by sampling 5000 instances with replacements, using the entire dataset as our population $\calW$. For each instance, we computed the distance from the aggregated result $\hatv x$ to the ground truth $\vec x^*$ of the dataset, and averaged the aggregated error over all instances. We refer to the average expected error of a policy as the \emph{loss}.

\subsection{Binary questions}
Each question has two possible answers, $\calX =  \{a,b\}$. Hamming distance was used to calculate distances between vectors. Namely, the distance between two vectors is the number of questions on which they disagree. As an aggregation rule $\vec g$ we use {Weighted Plurality}, herein denoted by $\wpl$.
Formally, 
\[
\wpl(\vec w_M,S_m)^{(j)} \defeq 
\begin{cases}
a & \text{if }\sum_{i\in S_M:x_i^{(j)} = a} w_i > \frac{m}{2} \\
b & \text{if }\sum_{i\in S_M:x_i^{(j)} = b} w_i > \frac{m}{2} 
\end{cases},
\]
and ties are broken uniformly at random. Note that in the binary case, Weighted Plurality and Weighted Majority coincide. The following datasets were examined: \\

\begin{figure*}[t!]
    \centering
    \begin{subfigure}{0.3\textwidth}
        \centering
       	\includegraphics[scale=0.34]{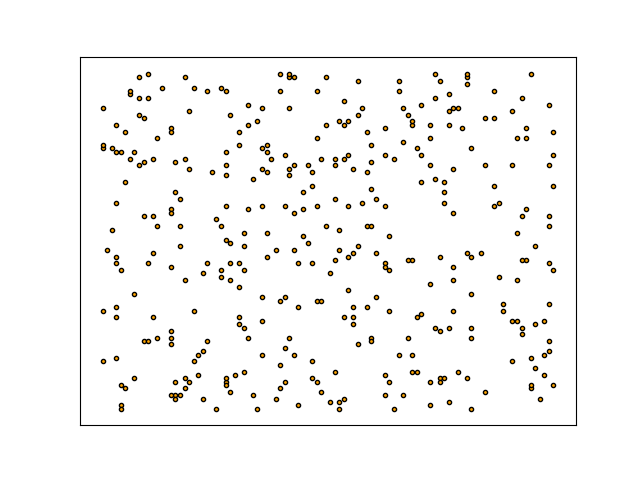}
    \includegraphics[scale=0.34]{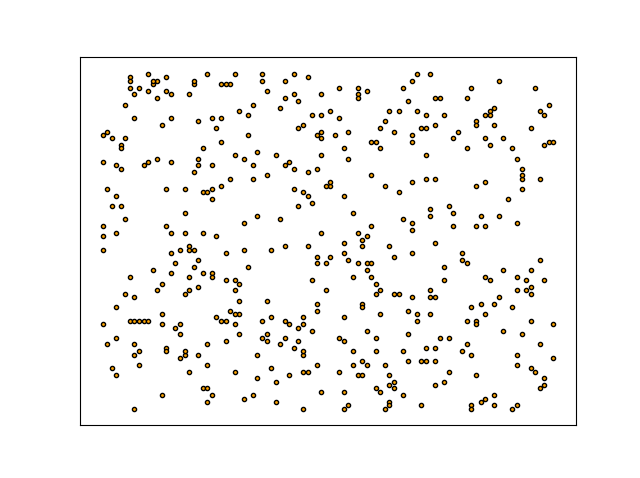}
    \end{subfigure}%
    ~ 
    \begin{subfigure}{0.7\textwidth}
        \centering

        \includegraphics[scale=0.28]{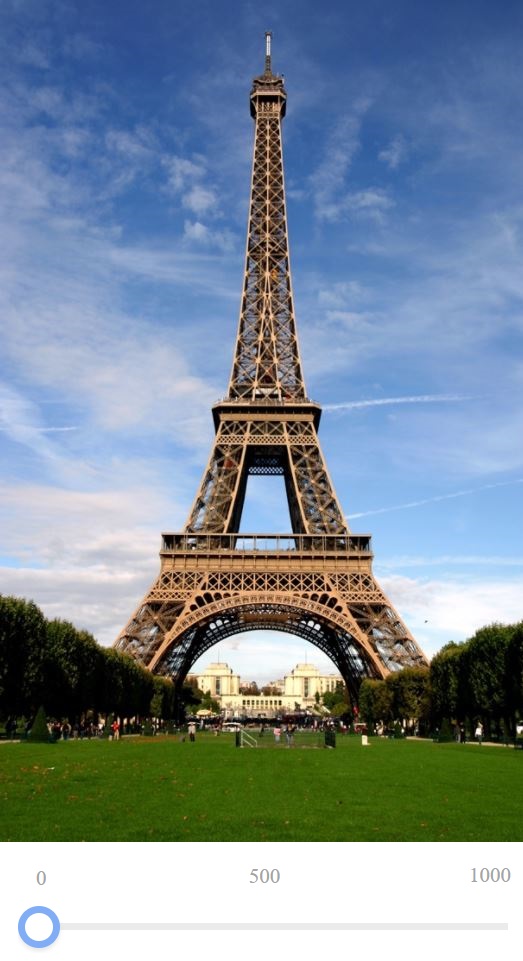}

    \end{subfigure}
    \caption{Representative questions of the Comparing dots (left) and Buildings' height (right) tasks. In the Comparing dots task, each worker was presented with a pair of dot figures, and had to determine which figure contained more dots.  In the Buildings' height task, each worker was given a picture of a building, and had to estimate the building height in meters by selecting a number between 0 and 1000 along a slider.\label{fig:tasks}}
\end{figure*}

\noindent\textbf{Comparing Dots} 41 workers were shown 25 pairs of pictures and had to estimate which picture out of the two has more dots\footnote{A similar task was performed in \cite{mao2013better}, where workers ordered a set of four pictures by increasing number of dots. While we used their dataset in a preliminary version of this paper, we found it unsuitable for the current paper since we wanted datasets where every participant makes more decisions.}

\noindent\textbf{Golden Gate} 35 workers were asked 21 questions of identifying if the picture contains the Golden Gate Bridge.  This dataset was collected by \cite{shah2015double}.  

\subsection{Categorical questions}
  Each question has a discrete set of answers, $\calX =  \{a,b,\dots, l\}$, and the distance function is the hamming distance. As in the Binary domain, we aggregate answers using $\wpl$ defined above when naturally extended to multiple possible answers. We used the following datasets, all collected by \cite{shah2015double}:\\

\noindent\textbf{Dogs} 31 workers answered 85 questions, identifying the breed of the dog in the picture. We omitted 4 workers due to missing data in their report.

\noindent\textbf{Heads of Countries} 32 workers answered 20 questions, identifying heads of countries. We omitted 3 workers due to missing data in their report.

\noindent\textbf{Flags} 35 workers answered 126 questions, identifying countries to which flags belong. 


\subsection{Continuous questions}
$\calX$ is an interval $[0,H]$. The distance metric is the $L^1$-norm, thus $d(\vec x,\vec x') = \sum_{j\leq k}|x^{(j)}-x'^{(j)}|$. We used the \textbf{Weighted Mean}, denoted by $\wmn$, as an aggregation function\footnote{We also considered the Weighted Median, which produced roughly the same results and hence omitted.}.  Formally, 
\[
\wmn(\vec w_M,S_m)_j = \sum_{i\in M}w_i x_i^{(j)}.
\]
We examined the following datasets:\\

\noindent\textbf{Buildings' Height} 208 workers were shown 25 buildings pictures, they were asked to estimate their height in meters. In Figure \ref{fig:tasks} we illustrate one such question.

\noindent\textbf{Counting Dots} 201 workers were shown 25 pictures of dots, they had to estimate the number of dots in each picture.  Counting dots in images has been suggested as a benchmark task for human computation in \cite{mao2013better, horton2010dot}.



\def\wvs{\vphantom{$2^{2^2}$}}
\begin{table*}[t]
	\begin{center}
		\begin{tabular}{|c|l||c|c|c|c|c|}
		\hline
        domain $\cal{X}$	&Task: 				& Workers & k (Questions)& Categories & $E[error]$ &$\sigma(error)$ \\ 
        \hline\hline
        Binary		&Comparing Dots   \wvs  				& 41 & 25 &2& 10.5& 2.86 \\\cline{2-7}
        			&Golden Gate \cite{shah2015double} \wvs & 35 & 21 &2& 4.7 & 3.4 \\
        \hline\hline
           			&Heads of Countries \cite{shah2015double} \wvs 	& 32 & 20  &4& 1.8  & 2.6\\ \cline{2-7}
        Categorical	&Flags \cite{shah2015double} \wvs 				& 35 & 126 &4& 47.8 & 29.6\\ \cline{2-7}
        			&Dogs \cite{shah2015double} \wvs  				& 31 & 85  &10& 11.9 & 11.4
        \\\hline\hline 
        Continuous 	&Buildings' Height 	& 208 & 25 &1001& 5271 	& 3074\\ \cline{2-7}
        			&Counting dots  \wvs& 201 & 25 &1001& 6159 	& 1656
        \\\hline

        \hline
		\end{tabular}
	\end{center}
	\caption{Summary of datasets. For each dataset, we include the population size $|W|$, number of questions $k$, and the mean and variance of the individual error. \label{Table: datasets}}
\label{sec:datasets summary}
\end{table*}

\section{Analysis}
Our main observation is that PCS improves accuracy of the aggregate answers by up to $30\% $ as summarized in Table \ref{Table:results summary}.\\

\def\wvs{\vphantom{$2^{2^2}$}}
\begin{table*}[ht] 
	\begin{center}
		\begin{tabular}{|c|l||c|c|c|}
		\hline
        domain $\cal{X}$     &Task: 		& $CS(12k)$ & $PCS_{0.2,0.33}(12k)$ & improvement [\%]
        \\\hline\hline
        Binary			&Comparing dots   \wvs  							& 8.40 	& 8.56 	& -1.90	 
        				\\\cline{2-5}
        	          	&Golden gate \cite{shah2015double} \wvs  			& 2.16	& 1.88 	& 12.96  	
        \\\hline\hline
        				&Heads of countries\cite{shah2015double} \wvs 		& 0.10 	& 0.09 	& 10.00   
        \\\cline{2-5}
        Categorical		&Flags \cite{shah2015double} \wvs 					& 20.97 & 14.42	& 31.23  
        \\\cline{2-5}
        				&Dogs \cite{shah2015double} \wvs  					& 2.98 	& 2.69	& 9.73 
        \\\hline\hline
        Continuous		&Buildings' height \wvs								& 3315 	& 3075 	& 7.24 	
        				\\\cline{2-5}
          			    &Counting Dots  \wvs 								& 5268.5& 5188 	& 1.53 	
        \\\hline
		\end{tabular}
	\end{center}
	\caption{Results summary.  Expected loss comparison across all our datasets between CS and PCS that spends $\beta=1/3$ of the budget on followers and followers answers on $\alpha=0.2$ of the questions in the survey.  The budget $B$ is sufficient for $12k$ answers in each domain. That is, 12 workers that answer on all the questions or 8 leaders and 40 followers.\rmr{}}
\label{Table:results summary}
\end{table*}

\subsection{Why PCS improves accuracy}
\paragraph{Possible (incorrect) explanation: sampling more workers reduces dependencies}  PCS sample more vectors, thus the dependency between answers decreases.  We conjectured that this dependency decrease might reduce the loss.  We compared CS to a policy were same budget is spent entirely on partial vectors.  This method had the same loss as CS, therefore we rejected this explanation.
\paragraph{Better leaders get higher weight}
The improvement of PCS over CS is explained by the weights of leaders: we argue that better leaders receive higher expected weight. In Figure \ref{fig: proxies_weights } (left), we used the policy $PCS_{0.2,0.375}(16k)$ which means we sample 10 leaders and 30 followers.  We sampled workers from the Flags dataset.  At each instance we sorted the leaders by their individual error and calculated their weights.  The best leader got about twice the weight of worst leader.  The observation that better leaders get higher weight is true across all of our datasets, but its magnitude differs.  Another common observation is that the weakest leader gets higher weight than higher ranked leaders, the reason is that in some populations there is a group of weak workers (worse than random clickers), those weak workers delegate their voting weight with high probability to the weakest leader. 
Figure \ref{fig: proxies_weights } (right) shows that weighting of the leaders improve the outcome substantially.
\begin{figure}[ht]
	\centering
  	\includegraphics[trim = 0mm 0mm 18mm 0mm, scale=0.47]{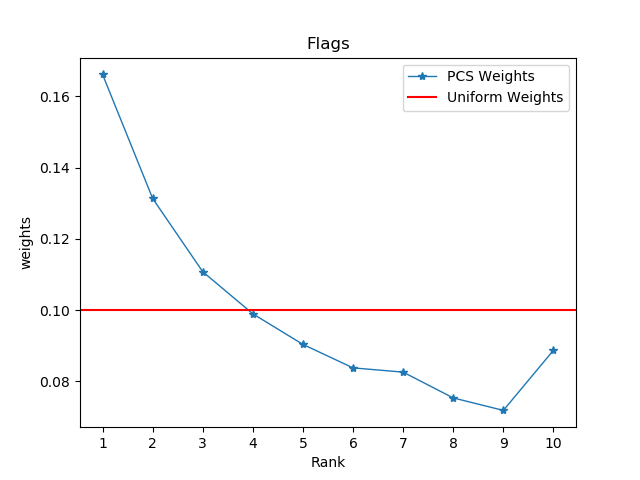}
    \includegraphics[trim = 0mm 0mm 18mm 0mm, scale=0.47]{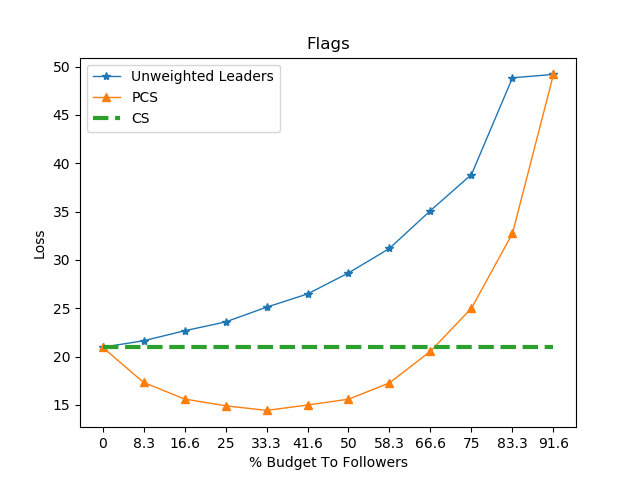}
    \caption{On the left, leaders voting weight sorted by their individual error.  Averaged across 5000 instances, over the Flags dataset.  Rank 1 means the leader with the lowest individual error in the sample, rank 10 means the leader with the highest individual error in the sample.  On the right, the orange triangles shows the loss according to $PCS_{0.2,\beta}(12k)$, for reference, in blue dots, the loss for the same leaders when they are unweighted (only leaders, without followers). \label{fig: proxies_weights }}	
\end{figure}

\subsection{Factors that effect PCS}
\paragraph{Expected loss as a function of the policy}
The expected loss depends on the policy used, that is, the way in which the budget is divided between obtaining leaders and followers. When increasing $\beta$ (the fraction of the budget spent to on followers), the expected loss decreases until reaching an optimum. Further increasing $ \beta$ beyond this point results in higher loss compare to CS. The optimal $\beta $ value depends on the dataset and ranges between $1/3$ and $1/2$.  Higher $\beta$ typically reduces accuracy because a sufficient number of leaders is needed for some high-quality leaders to be present, when some high-quality leader exists, adding followers is more cost effective.

To check this claim we evaluated the expected loss of $PCS_{0.2,\beta}(B)$ when using different $\beta$ values and different budgets over all datasets.  Figure~\ref{fig:PCS_high_improvement} shows the expected loss generated by $CS(12k)$ and $PCS_{0.2,\beta}(12k)$ across different values of $\beta$ on the Flags dataset.  Results were averaged over 5000 instances (similarly to the results in the Table \ref{Table:results summary}).  The results are typical to most of the datasets and most budget values.  Indeed, we see in Figure \ref{fig:PCS_high_improvement} that spending budget on followers is beneficial up to a certain amount (for this dataset and budget, the optimum is at $\beta =1/3$).

\begin{figure}[ht!]
	\centering
 	\includegraphics[scale=0.7]{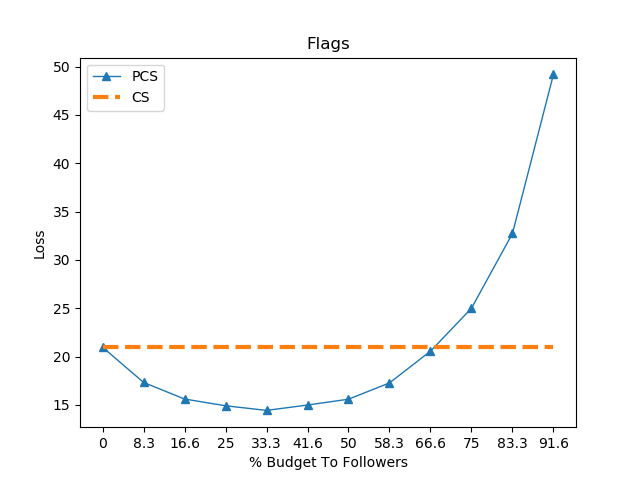}  
    \caption{Expected loss as a function of $\beta$, the fraction of the budget spent to buy followers.  The dashed line is the loss when using the CS policy (all the budget is used to pay for leaders).  The continuous line uses $PCS_{0.2,\beta}(12k)$ (which means $\beta$ of the budget is used to buy followers, and each follower is answering $0.2$ of the questions.  \gal{synthetic dataset that shows the same results, same evaluation to $\alpha$ }
    \label{fig:PCS_high_improvement}}	
\end{figure}

\paragraph{Expected loss as a function of the datasets}
\begin{enumerate}
\item \textbf{Domain}: \textbf{PCS is more effective in Categorical domain rather than Binary domain}.  Table \ref{Table:results summary} provides anecdotal evidence for this claim.  Intuitively, for PCS to work well, accurate workers must be close to other workers while inaccurate workers should be far from other workers.  Consider two workers and one question and assume both workers provide a wrong answer for the question. In the binary domain their distance from each other is $0$, while in the categorical domain they can be wrong in different ways so their distance can be $1$.   \gal{A:compare categorical datasets to the same dataset with unite categorize so 'a' 'b' is 0 and 'c' 'd' is 1,  now see that categorical better reduce the loss.  B: formalize a theorem that prove this}

\item \textbf{Number of questions:}
\textbf{PCS is more effective in datasets with high number of questions}
As the number of questions grows, the probability for incompetent workers to be wrong on the same question (with the same wrong answer) decreases, therefore the probability for a follower to follow incompetent leader decreases.  See section \ref{sec:Explaining leaders Weights} for a preliminary theorem.  In Table \ref{Table:results summary} we see that the dataset with the highest number of question has the highest improvement rate of PCS. \gal{compare dataset after cutting out questions, synthetic: compare across different values of $k$} 

\item \textbf{Individual error Variance}: \textbf{PCS is more effective when the variance of workers' Individual error is high.}  When variance is high, some followers with high individual error can follow leaders with low individual error.  When variance is low, followers will follow similar quality of leaders which will not improve the results.  In order to verify this claim, we produced a synthetic population in the following manner: each worker has a competence parameter $p_i\in (0, 1)$. A worker $i$ with competence $p_i$ answers each question correctly w.p. $p_i$.  Errors are independent across questions and workers.
The competence of each worker was sampled from a uniform distribution over an interval $[v_1, v_2]$,
where $v_1$ and $v_2$ are parameters.

\begin{figure}[ht!]
	\centering
	\includegraphics[  width=12cm,  height=8cm, keepaspectratio ]{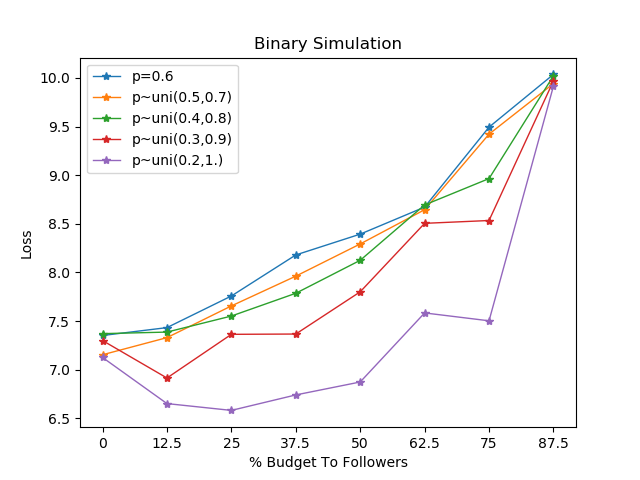} 
	\caption{ Loss as a function of $\beta$.  Synthetic populations with similar mean but different variance of their judgment competence $p$.
	\label{fig:simulation_binary}}	
\end{figure}

Figure \ref{fig:simulation_binary} shows a comparison of populations that have similar competence means but differ in their competence variance.  PCS is only beneficial to populations with high variance (purple and red lines).  Figure \ref{fig: pi_distribution_continuous} present the individual errors distribution of the Flags and Dogs datasets, the Flags dataset has a bimodal distribution, meaning that there are distinct groups of "good" and "bad" workers.  The Dogs dataset has a unimodal distribution.  This might explain why over the Flags dataset PCS reduces the loss better than over the Dogs dataset. \gal{confirm the claim by cut different datasets into high and low variance sub datasets.}

\begin{figure}[ht]
	\centering
 	\includegraphics[trim = 1mm 0mm 16mm 0mm,scale=0.47,clip=true]{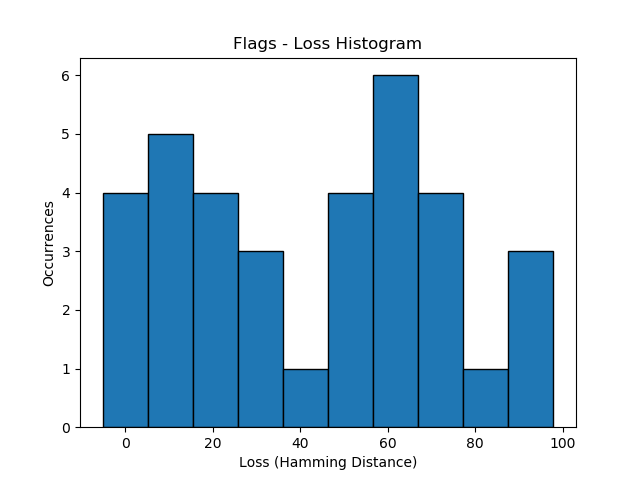}
    \includegraphics[trim = 1mm 0mm 16mm 0mm,scale=0.47,clip=true]{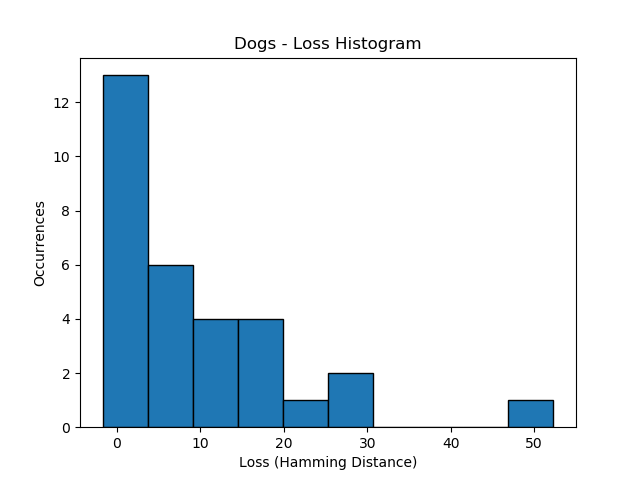}
	\caption{individuals errors histogram.  On the left the Flags dataset shows bimodal distribution of individuals errors, Dogs on the right shows unimodal distribution of individuals errors.
    \label{fig: pi_distribution_continuous}}	
\end{figure}

\end{enumerate}

\section{Explaining Leaders Weights}\label{sec:Explaining leaders Weights}
In the previous section, we observed empirically that competent leaders tend to get more followers and thus higher weight.  We are interested in a theoretical model that explains this phenomenon. One such result was given by Cohensius et al. \cite{cohensius2017proxy} for the limit case of infinite binary questions that are answered by followers $\alpha \cdot k\rightarrow \infty$. In this case, \emph{all} followers select either the best or the worst leader, according to which one is closer. However, in realistic scenarios (including our datasets), the number of questions is typically dozens to hundreds. 

We model a simplified version of the problem, where there is one follower which is requested to choose a leader amongst two possible leaders in a binary domain. Given a worker's competence, we only know the distribution of his answers, and we want to estimate the probability that the follower would choose the better leader.

\paragraph{Bounding the probability of a follower to choose the incompetent leader}
Consider two leaders, High and Low, with judgment competence $p_h,p_l$ respectively, such that $p_h >p_l > 0.5$. In addition, consider a follower Z with judgment competence $p_z > 0.5$.  W.l.o.g. $\vec x^*=\vec 1$. Thus $\vec x_h,\vec x_l$ and $\tilde {\vec x_z}$ are random binary vectors of length $k$.
For simplicity, denote
\[
\vec s_h=\vec x_h(\top(\tilde {\vec x_z})),\vec s_l=\vec x_l(\top(\tilde {\vec x_z})), \vec s_z = \tilde {\vec x_z}(\top(\tilde {\vec x_z})).
\]
That is, $\vec s_h,\vec s_l$ and $\vec s_z$ are the answers of High, Low and Z to the questions answered by Z. Hence $\vec s_h,\vec s_l$ and $\vec s_z$ are random binary vectors of length $\alpha k$, whose entries are `1' with respective probabilities of $p_h,p_l$ and $p_z$. 
\begin{lemma}\label{lemma:bound Pr(bad)}
\[
\Pr(d(\vec s_h,\vec s_z)\geq d(\vec s_l,\vec s_z)) \leq  \exp{\left(-\frac{\alpha k (p_h-p_l)^2(2p_z-1)^2}{2}\right)}.
\]
\end{lemma}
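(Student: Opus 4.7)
The plan is to rewrite the event as a sum of i.i.d.\ bounded differences per question and then apply Hoeffding's inequality.

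First, I would define, for each $i \in \top(\tilv x_z)$, the per-coordinate indicator difference
\[
A_i \defeq \ind{s_h^{(i)}\neq s_z^{(i)}} - \ind{s_l^{(i)}\neq s_z^{(i)}}\in\{-1,0,1\}.
\]
Since the entries of $\vec s_h,\vec s_l,\vec s_z$ are independent Bernoulli across coordinates (and across the three vectors), the $A_i$ are i.i.d.\ with values in $[-1,1]$, and
\[
d(\vec s_h,\vec s_z)-d(\vec s_l,\vec s_z)=\sum_{i=1}^{\alpha k} A_i.
\]
Thus the event of interest is $\{\sum_i A_i\geq 0\}$.

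Next I would compute the mean drift. A short calculation gives $\Pr(s_h^{(i)}\neq s_z^{(i)})=p_h+p_z-2p_hp_z$ and similarly for $l$, hence
\[
\mathbb{E}[A_i]=(p_h-p_l)(1-2p_z)=-(p_h-p_l)(2p_z-1)\defeq -\mu,
\]
with $\mu>0$ by the assumptions $p_h>p_l$ and $p_z>\tfrac12$. Writing the target event as a large-deviation event,
\[
\Pr\!\left(\sum_{i=1}^{\alpha k} A_i\geq 0\right)=\Pr\!\left(\sum_{i=1}^{\alpha k}\bigl(A_i-\mathbb{E}[A_i]\bigr)\geq \alpha k\,\mu\right).
\]

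Finally I would apply Hoeffding's inequality to the i.i.d.\ sum of $\alpha k$ variables each bounded in an interval of length $2$, obtaining
\[
\Pr\!\left(\sum_{i=1}^{\alpha k}\bigl(A_i-\mathbb{E}[A_i]\bigr)\geq \alpha k\,\mu\right)\leq \exp\!\left(-\frac{2(\alpha k\,\mu)^2}{\alpha k\cdot 2^2}\right)=\exp\!\left(-\frac{\alpha k\,\mu^2}{2}\right),
\]
which after substituting $\mu=(p_h-p_l)(2p_z-1)$ is exactly the desired bound. There is no real obstacle here; the only subtle points are (i) verifying that $A_i$ are genuinely i.i.d.\ (this uses the per-coordinate independence of all three vectors) and (ii) choosing the right concentration inequality — Hoeffding is tight enough because the $A_i$'s are almost surely bounded by $1$, so the sub-Gaussian constant is explicit. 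A variance-based bound (Bernstein) would tighten the constant but is not needed to match the claimed inequality.
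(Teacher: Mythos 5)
Your proposal is correct and follows essentially the same route as the paper: your $A_i$ is exactly the paper's $B_i$ (the $\pm1/0$ per-question agreement difference), the mean computation $(p_h-p_l)(1-2p_z)$ matches, and the concluding step is the same application of Hoeffding's inequality to i.i.d.\ variables bounded in $[-1,1]$. The only (cosmetic) difference is that you correctly index the sum by $\alpha k$ throughout, where the paper's displayed sums write $k$.
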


\begin{proof}
Define the random variable $B_i$ such that 
\[
B_i = 
\begin{cases}
1 & \text{High and Z disagree, Low and Z agree}\\
-1 & \text{High and Z agree, Low and Z disagree}\\
0 & \text{Otherwise}
\end{cases}
\]
where agreement/disagreement is with respect to the $i$-th question. Since the entries of $\vec s_h,\vec s_l$ and $\vec s_z$ are independent, so are $B_i$ and $B_{i'}$ for $i\neq i'$. Notice that
\begin{align*}
\mathbb{E}(B_i)&=\mathbb{E}(B_i|s_z^{(i)}=1)\Pr(s_z^{(i)}=1)+\mathbb{E}(B_i|s_z^{(i)}=0)\Pr(s_z^{(i)}=0) \\
&=p_z\left( (1-p_h)p_l - p_h(1-p_l)   \right)+(1-p_z)\left( p_h(1-p_l)-(1-p_h)p_l   \right) \\
&=(2p_z-1)(p_l-p_h).
\end{align*}
Notice that  $d(\vec s_h,\vec s_z) - d(\vec s_l,\vec s_z)\sim\sum_{i=1}^k B_i$, therefore
\begin{align}
\label{eq:bdgfj}
\Pr(d(\vec s_h,\vec s_z)\geq d(\vec s_l,\vec s_z))&=\Pr(d(\vec s_h,\vec s_z)- d(\vec s_l,\vec s_z)\geq 0)=\Pr\left(\sum_{i=1}^k B_i \geq 0\right)  \nonumber \\
&=\Pr\left(\frac{1}{k}\sum_{i=1}^k B_i -\mathbb{E}(B_i) \geq - \mathbb{E}(B_i) \right) 
\end{align}

By invoking Lemma \ref{lemma:hoeff} on Equation (\ref{eq:bdgfj}), we obtain
the desired result.
\end{proof}
\begin{lemma}[Hoeffding's Inequality]
\label{lemma:hoeff}
Let $B_1\dots B_k$ be i.i.d. r.v. such that $B_i\in[-1,1]$ and $\mathbb E(B_i)=\mu$. For every $t>0$ it holds that
\[
\Pr\left(\frac{1}{k}\sum_{i=1}^k B_i -\mu \geq t\right) \leq \exp{\left(  -\frac{k t^2}{2} \right) }.
\]
\end{lemma}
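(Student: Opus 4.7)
The plan is to prove Hoeffding's inequality by the standard Chernoff/exponential-Markov method. First I would center the variables: let $Y_i \defeq B_i - \mu$, so $\mathbb{E}(Y_i) = 0$ and $Y_i$ is supported in the interval $[-1-\mu,\,1-\mu]$, which has length $L = 2$. The event $\frac{1}{k}\sum_i B_i - \mu \geq t$ is equivalent to $\sum_{i=1}^k Y_i \geq kt$. For any free parameter $s > 0$, monotonicity of $e^{sx}$ together with Markov's inequality and the independence of the $Y_i$'s yields
\[
\Pr\!\left(\sum_{i=1}^k Y_i \geq kt\right) \;\leq\; e^{-skt}\prod_{i=1}^k \mathbb{E}\!\left[e^{sY_i}\right].
\]

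The key step is to control each moment generating function $\mathbb{E}[e^{sY_i}]$ via the standard Hoeffding MGF lemma: for any centered random variable $Y$ supported on an interval of length $L$, one has $\mathbb{E}[e^{sY}] \leq \exp(s^2 L^2/8)$. This is the real substantive ingredient and the step I expect to be the main obstacle. I would prove it as a short sub-claim using convexity of $e^{sx}$: bound $e^{sY}$ pointwise by the linear interpolant between the two endpoints of the support, take expectations (using $\mathbb{E}(Y) = 0$), and then show that the logarithm of the resulting one-variable function of $s$ is sub-Gaussian by differentiating twice and checking that the second derivative is at most $L^2/4$. Applied with $L=2$, this gives $\mathbb{E}[e^{sY_i}] \leq e^{s^2/2}$.

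Plugging back in yields $\Pr(\sum_i Y_i \geq kt) \leq \exp(-skt + ks^2/2)$ for every $s > 0$. The exponent is a quadratic in $s$ minimized at $s = t$, where its value is $-kt^2/2$, producing the claimed bound $\exp(-kt^2/2)$. Everything outside the MGF sub-claim — the exponential Markov step, the i.i.d.\ product factorization, and the one-dimensional optimization — is mechanical, so the entire write-up should be short once the convexity argument for the sub-Gaussian MGF bound is recorded.
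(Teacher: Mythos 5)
Your proof is correct: the Chernoff/exponential-Markov step, the factorization over the i.i.d.\ $Y_i$, Hoeffding's MGF lemma with interval length $L=2$ giving $\mathbb{E}[e^{sY_i}]\leq e^{s^2/2}$, and the optimization at $s=t$ together yield exactly the stated bound $\exp(-kt^2/2)$. Note, however, that the paper offers no proof of this lemma at all --- it is invoked as a named standard result (Hoeffding's inequality) and used as a black box in the proof of Lemma~\ref{lemma:bound Pr(bad)} --- so there is nothing to compare against; your write-up is simply the standard textbook derivation, and it is sound.
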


That is, the probability of Z selecting Low decreases exponentially in the distance between Low's and High's competence levels. Another observation is that if $p_z$ approaches $0.5$ (i.e. an incompetent follower), then the term in Lemma \ref{lemma:bound Pr(bad)} approaches $e^0$, in other words, incompetent followers spread their weight roughly evenly over all leaders, whereas competent followers are substantially more likely to give their weight to a competent leader.

This supports the intuitive argument from \cite{cohensius2017proxy} regarding the ``Anna Karenina principle'' (as good workers are indeed similar to one another), and thus at least partially explains the weight distribution of leaders.

\section{Discussion}
We introduced Proxy Crowdsourcing (PCS), an aggregation method that collects complete vectors and partial vectors of answers.  The method then uses the partial vectors to weigh the complete vectors.  We showed that this method can reduce the expected error in various crowdsourcing domains.  PCS can be used in addition to other aggregation methods.  Our conjecture is that for a wide range of aggregation methods, adding $PCS$ to the aggregation method will improve the outcome.  We observe that indeed this is the case for a wide variety of domains.  Table \ref{Table:results summary} shows that PCS improves the outcome for $Plurality$ and $Mean$ aggregation rules. Similar results were obtained for the $Median$ rule.

\paragraph{Allowing Self Selection}  In the experiment discussed in this paper we preallocated workers into leaders and followers.  However preallocation is not mandatory, a requester can allow workers to choose after a sample of questions whether they wish to leave (followers) or to complete the survey (leaders).  We hypothesize that allowing workers to \textit{self select} their role can further improve the results.  This can be done by using a bonus scheme that motivates the strong workers to become leaders and the weak workers to become followers.  In preliminary experiments we saw that PCS with self selection produce more accurate outcome than self selection alone.

\paragraph{Allowing followers to choose their leader}  Previous work \cite{cohensius2017proxy}  assumed that when given the option to choose their leader, followers follow the leader closest  to them.  In this work we wanted to test this assumption, by allowing followers to choose which leader they prefer out of a predefined set of leaders.  We verified that indeed in crowdsourcing tasks, when workers are paid by the number of correct answers, followers tend to follow the leader closest to them.  In an early setting of our experiment, followers were asked: "Why did you choose to follow this leader?"  We observe two types of answers:
\begin{itemize}
\item The random follower: "I was unsure about my choices"
\item The closest leader follower: "Choosing someone who guessed like I did would save time, and would likely give similar results."
\end{itemize}
This verifies our assumption that followers tend to follow the closest proxy.

	\bibliography{proxy}
	\bibliographystyle{plain}	
	\begin{contact}
		Gal Cohensius\\
		Technion---Israel Institute of Technology\\
		Technion, Israel\\
		\email{Gal.Cohensius@gmail.com}
	\end{contact}
	
	\begin{contact}
		Omer Ben - Porat\\
		Technion---Israel Institute of Technology\\
		Technion, Israel\\
		\email{omerbp@gmail.com}
	\end{contact}

	\begin{contact}
		Reshef Meir\\
		Technion---Israel Institute of Technology\\
		Technion, Israel\\
		\email{Reshef.Meir@ie.technion.ac.il}
	\end{contact}
	
	\begin{contact}
		Ofra Amir\\
		Technion---Israel Institute of Technology\\
		Technion, Israel\\
		\email{ofra.amir@gmail.com}
	\end{contact}
\end{document}